\newtheorem{lemma}{{Lemma}}
\title{Intermediate Performance Analysis of Growth Codes
\thanks{N. Thomos is with the Signal Processing Laboratory 4 (LTS4), Ecole Polytechnique F\'ed\'erale de Lausanne (EPFL), Lausanne, Switzerland, and the Communication and Distributed Systems laboratory (CDS), University of Bern, Bern, Switzerland (e-mail: nikolaos.thomos@epfl.ch).}
\thanks{R. Pulikkoonattu is with Broadcom Corp, USA (e-mail: rethna@broadcom.com).}
\thanks{P. Frossard is with the Signal Processing Laboratory 4 (LTS4), Ecole Polytechnique F\'ed\'erale de Lausanne (EPFL), Lausanne, Switzerland (e-mail: pascal.frossard@epfl.ch).}
\thanks{This work has been supported by the Swiss National Science Foundation, under grants PZ00P2-121906 and PZ00P2-137275.}}
\author{Nikolaos Thomos, \textit{Member, IEEE}, Rethnakaran Pulikkoonattu, \textit{Member, IEEE}, and Pascal Frossard, \textit{Senior Member, IEEE}.}
\begin{document}

\maketitle

\begin{abstract}
Growth codes are a subclass of Rateless codes that have found interesting applications in data dissemination problems. Compared to other Rateless and conventional channel codes, Growth codes show improved intermediate performance which is particularly useful in applications where performance increases with the number of decoded data units. In this paper, we provide a generic analytical framework for studying the asymptotic performance of Growth codes in different settings. Our analysis based on Wormald method applies to any class of Rateless codes that does not include a precoding step. We evaluate the decoding probability model for short codeblocks and validate our findings by experiments. We then exploit the decoding probability model in an illustrative application of Growth codes to error resilient video transmission. The video transmission problem is cast as a joint source and channel rate allocation problem that is shown to be convex with respect to the channel rate. This application permits to highlight the main advantage of Growth codes that is improved performance (hence distortion in video) in the intermediate loss region.
\end{abstract}

\begin{keywords}
Rateless Codes, Growth Codes, data dissemination, error resilient, video streaming.
\end{keywords}

\section{Introduction}
\label{sec:intro}

Rateless codes \cite{DigitalFountain98} have been proposed as an efficient method to design decentralized data dissemination systems without employing expensive routing protocols as they do not require any source coordination. This becomes feasible due to the rateless property that allows on-the-fly generation of arbitrary numbers of packets. The most successful Rateless codes are certainly the Raptor codes \cite{Shokrollahi06} that are based on LT codes \cite{Luby02}. Raptor codes perform close to maximum distance separable (MDS) codes and only require slightly more symbols than the number of source symbols for successful recovery. These appealing properties, combined with linear encoding and decoding times, have conducted the Raptor codes to be adopted by recent communication protocols like $3$GPP \cite{3GPP} and DVB-H \cite{DVB-H}.

In practice, Rateless codes need a feedback channel for transmitting a termination message from sinks to sources in order to notify that the data decoding is successful. Unfortunately, in many scenarios a feedback channel is not available. Thus, the transmission rate should be predetermined by taking into account the network conditions. However, the channel conditions vary dynamically, may deteriorate fast and become rapidly different from the predicted ones. This leads to inaccurate estimation of the transmitted rate, and thus to inefficient exploitation of the network resources with possibly decoding failures. If the decoder receives an insufficient number of packets for forming a full rank decoding matrix, common Rateless codes have unfortunately poor recovery properties; they are characterized by on-off performance, which means that the transmitted data can be either fully recovered or not at all. Partial decoding is possible, but in general it is limited. Growth codes \cite{Growthcodes} have been proposed to enhance the intermediate performance of Rateless codes. They permit recovery of more data than conventional codes when the available data is insufficient for complete message recovery. The improved intermediate performance of Growth codes has attracted the attention of the data communication community, especially for the transmission of unequal importance data streams as proposed in \cite{UnequalGrowthcodes}. 

In this paper, we propose a new and complete decoding performance analysis for Growth codes based on the Wormald method \cite{Wormald95} that is typically used for the analysis of graph stochastic processes. Contrarily to previous studies, the proposed method for characterizing the decoding performance of Growth codes is generic and valid for various codeblocks\footnote{In this paper, the term codeblock is used to denote a group of packets that are encoded together by a channel code.} size and accurately predicts the decoding probability for both intermediate and high reception rates (when more symbols than the number of source symbols have been received). It therefore provides a more generic performance analysis than asymptotic performance bounds with arbitrary long blocks such as those derived in \cite{Sanghavi07}. An accurate performance analysis of small and medium blocksize is very important in the design of data communication systems and our novel performance analysis shows that the decoding performance of Growth codes can be approximated by an exponential model. We apply our analysis in an illustrative example in video communications. We combine our model with the video distortion model proposed in \cite{paper:frossard:2001} and cast a joint source and channel rate allocation problem for optimized error resilient transmission. This problem is shown to be convex and the optimal source and channel rate can be determined by methods such as dynamic programming and Lagrangian optimization. Our design comes close to the concepts of \cite{UnequalRateless,EWFTMM09} where unequal error protection Rateless codes have been designed specifically for streaming. We however here concentrate on the intermediate of the channel codes, which is one of the main advantages of Growth codes. We show that these codes provide an interesting solution for error resilient video streaming with graceful degradation of quality when channel conditions deteriorate. 

The paper is organized as follows. In Section~\ref{sec:Rateless_codes}, we present the general principles of Rateless codes focusing especially on Growth codes. We analyze Growth codes recovery performance using the Wormald method, discuss the expected Growth codes behavior and provide a model for the symbol recovery probability in Section~\ref{sec:wormald}. In Section~\ref{sec:distanalysis}, we cast an optimization problem for determining the optimal source and channel rate allocation in an illustrative example of video transmission. Finally, conclusions are drawn in Section~\ref{sec:conclusions}.

\section{Rateless codes}
\label{sec:Rateless_codes}

\subsection{General structure}

Data dissemination systems usually employ Reed Solomon (RS) codes \cite{Lin82} to cope with packet erasures on communication channels. RS codes can recover the transmitted data only when the received set of packets is equal to the number of source symbols $k$. However, the application of RS codes is limited by their decoding complexity that grows quadratically with the block size. Rateless codes solve this problem in applications with complexity or delay constraints as they offer linear encoding times. Linear decoding times can be further achieved when the degree distribution is carefully designed and decoding is performed by means of belief propagation. These codes however incur a small performance penalty $\epsilon$ and require slightly more packets than the number of source packets $k$ for decoding, (\textit{i.e.}, the minimal number of packets for decoding is $(1+\epsilon) \cdot k$). The decoding probability increases with the number of received packets since the probability to form a full rank system of packets at decoder increases with the number of packets. Rateless codes generally do not have a pre-determined code rate and they generate as many symbols as needed by each client for successful decoding. This permits better exploitation of the available network resources as systems employing Rateless codes can easily cope with network dynamics; inaccurate estimation of the channel conditions do not directly lead to over-protecting or under-protecting the source data as in rate-optimized codes.

In more details, the Rateless encoded symbols are generated by combining (XOR-ing) source packets selected uniformly at random. The number of combined packets (the degree of the corresponding codeword) is determined by the degree distribution function $\Omega(x)=\sum^{d_{\max}}_{i=1} \Omega_i \cdot x^i$, where $\Omega_i$ denotes the probability of generating a symbol with degree $i$ (a symbol that is the result of XOR-ing $i$ data symbols). The parameter $d_{\max}$ is the maximum allowable symbol's degree which cannot exceed the number of source symbols. If $x_i$ stands for the $i$th source packet and $y_j$ is the $j$th transmitted symbol respectively, we can write $$y_j=\sum^{d_j}_{l=1}\bigoplus\ x_l,$$ where $d_j$ is the degree of the $j$th symbol and $\sum\bigoplus$ is the bitwise XOR operation. Rateless codes have an implicit structure as the decision about the number of packets to be combined for the generation of the $y_j$ is made by sampling $\Omega(x)$; the identity of the combined packets is selected uniformly at random. Due to the implicit structure of Rateless codes a small header called ``ESI '' is appended to each packet. It conveys information to the decoder about the packets that have been combined for the generation of encoded packets. The ESI is usually the seed of a pseudorandom generator used for generating the encoded packet. Please note that the terms ``packet'' and ``symbol'' are used interchangeably throughout the analysis.

Next, we focus on Growth codes that are a particular class of Rateless codes with advanced intermediate decoding performance and sustained performance at high rates. Such characteristics are particularly interesting for transmission of time-constrained data that can benefit from partial decoding. 

\subsection{Growth codes}
\label{sec:growth}

Popular Rateless codes such as LT codes and Raptor codes have small overheads and permit decoding from a set of symbols than is slightly larger than the set of source packets. These codes have however poor intermediate performance when the number of symbols is not sufficient for perfect decoding. On the other hand, Growth codes \cite{Growthcodes} offer better intermediate performance as partial recovery is possible, but are characterized by a larger overhead $\epsilon$ than LT or Raptor codes.

We give now more details about the design of Growth codes. The encoding procedure of Growth codes is similar to that of LT codes. Symbols are generated according to a degree distribution function $\Omega(x)$. The design of $\Omega(x)$ is quite intuitive, \textit{i.e.}, when few packets have arrived to a client, it is better to receive packets of degree one that  permit immediate decoding. Instantaneous decodability of packets of degree one however decreases with the number of received packets like in the coupons collector problem \cite{RandomizedAlgorithms95}. Indeed, the probability that a packet is a duplicate increases with the number of received packets with degree one \cite{Growthcodes}.

Based on this intuition, the degree of the encoded symbols progressively increases until sinks are able to recover the transmitted data content. The Growth codes degree distribution is thus given as

\begin{equation}
\Omega(k): \Omega_i^{*} = \max
\left(0,\min\left(\frac{K_i-K_{i-1}}{k},\frac{k-K_{i-1}}{k}\right)\right)
\label{eqn:growth}
\end{equation}
where the parameter $K_i$ is computed by the following recursive relation \cite{Growthcodes} for the $k$th transmitted symbol:

\begin{equation}
K_j=K_{j-1}+\sum_{i=R_{j-1}}^{R_j-1}{\frac{\binom{k}{j}}{\binom{i}{j-1}
(k-i)}}
\end{equation}
with $R_j=\frac{jk-1}{j+1}$ and $K_1=\displaystyle \sum_{i=0}^{R_1-1}{\frac{k}{k-i}}$. In order to recover $R_j$ symbols, the receiver has to get $K_j$ symbols in expectation. The Growth codes decoding probability as a function of the number of received packets $k$ is determined by

\begin{equation}
P_{r,d}(k) = \begin{cases} \frac{(k-r+1)}{k}, & d=1 \\ \frac{(k-r+1)\binom{r-1}{d-1}}{\binom{k}{d}}, & d=2,\ldots,r \\ 0, & d> r \end{cases}
\label{eqn:prd}
\end{equation}
It corresponds to the probability of decoding a symbol of degree $d$ when $r$ symbols have been already decoded. The Growth codes, encoding procedure, can be summarized as follows

\floatname{algorithm}{Procedure}
\begin{algorithm}
\caption{Growth Codes encoding}
\label{algo:Growth_encode}
\begin{algorithmic}[1]
    \STATE Choose randomly the degree $d$ of the LT encoded symbol by sampling $\Omega(d)$.
    \STATE Choose uniformly $d$ distinct symbols.
    \STATE Combine the symbols by XOR operations
\end{algorithmic}
\end{algorithm}

On the client side, two algorithms can be used for Growth codes decoding namely the Decoder-S and the Decoder-D \cite{Growthcodes}. The Decoder-D ignores the packets that are not immediately useful, \textit{i.e.}, the packets whose distance to the recovered packets is larger than one, while Decoder-S stores in a list all received packets even if they are not immediately useful and checks the list of packets as the decoding process progresses. Although Decoder-D has slightly inferior decoding performance compared to that of Decoder-S, it is preferable for low-cost receivers performing on-the-fly decoding. In our analysis, we restrict our attention only to the Decoder-D that is summarized in Procedure \ref{algo:Growth_decode}.

\floatname{algorithm}{Procedure}
\begin{algorithm}[h]
\caption{Growth Codes decoding: Decoder-D algorithm}
\label{algo:Growth_decode}
\begin{algorithmic}[1]
    \IF{a symbol $y$ with $d=1$ arrives at client}
    \STATE Insert the symbol $y$ in $\mathcal{D}$.
    \ENDIF
    \WHILE{$\mathcal{D} \neq \O$}
        \STATE Select a symbol $x$ from $\mathcal{D}$.
        \IF{$x \notin \mathcal{L}$ and $\text{dist}(x,\mathcal{L})=1$}            
            \STATE XOR $x$ with the symbols $\in \mathcal{L}$ that used to generate the symbol $x$.
            \STATE Insert the recovered symbol in $\mathcal{L}$.
            \IF{in the meanwhile a new instantaneous decodable symbol $y$ arrives at client}
            \STATE Insert the symbol $y$ in $\mathcal{D}$.
            \ENDIF
        \ENDIF
    \ENDWHILE
\end{algorithmic}
\end{algorithm}

In the decoding algorithm, $\mathcal{D}$ is the ripple of the Growth codes, $\mathcal{L}$ is a list that contains the decoded symbols and $\text{dist}(x,\mathcal{L})=1$ means that the symbol is instantaneously decodable, \textit{i.e.}, an original source symbol can be recovered. Upon receiving a packet, the client examines whether the packet is instantaneously decodable and is not already in $\mathcal{L}$. When it is true, the packet is decoded and then inserted in $\mathcal{L}$, otherwise the packet is deleted. This procedure continues as long as the source is decoded or $\mathcal{D}$ is empty.

\section{Analysis of Growth Codes Performance}
\label{sec:wormald}

\subsection{Wormald method}
 
The Wormald method \cite{ModernCodingTheory} is an analytical tool that is used to study the expected behavior of stochastic processes. It is based on the idea that a system stays close to the expected behavior \cite{Wormald95} after a series of random steps, with very high probability. Such a behavior can further be determined by a set of differential equations.

Let us consider a graph random process $G(t)$. The process starts with graph $G(0)$ from which edges are repeatedly removed according to a probabilistic rule that is known a priori. This removal procedure results in a probabilistic set of sequences $G(0),G(1),\ldots, G(t)$, where $G(t)$ denotes the $t$-th graph in that process. When no edge can be removed from $G(t)$, the process becomes stationary. Thus, we have $G(t+1)=G(t)$ and this is the final graph of this stochastic process. Usually this happens for large $t$ ($t \to \infty$). For many systems the above procedure however terminates after a finite number of steps, \textit{i.e.}, at time $t=T_d$; this forms a family of graphs $G(0),G(1),\ldots,G(T_d)$. The Wormald method studies the behavior of such a sequence of processes.

Since Decoder-D algorithm for Growth codes decoding is described by a stochastic process (\textit{i.e.}, Markov process), the Wormald method can be used for the analysis of its performance. In particular, the residual Tanner graph $G(t)$ during decoding is characterized by a set of pairs $(V_i(t),C_i(t))$, where $V_i(t)$ and $C_i(t)$ denote the total number of edges connected with variable nodes and respectively check nodes of degree $i$ at time $t$. In the Growth Codes analysis, the graph actually contains $k$ variable nodes where $k$ is equal to the number of source symbols.

Next, we discuss the appropriateness of using Wormald theorem to characterize the decoding performance of Growth codes. We start by giving the conditions that Growth codes decoding should fulfill in order to use the Wormald method. Then, we use this method to describe the evolution of the decoding process.

\subsection{Wormald theorem}

In this section, we describe the Wormald theorem and present the conditions that have to be satisfied to apply the Wormald method \cite{Lipschitz}. It should be noted that the Wormald theorem can also be used for analyzing other Rateless codes, as long as their encoding does not include any precoding step. 

Let us consider that $G(t)$ has  a state space $\{0,\ldots,\theta\}^{d}, d\in \mathbb{N}$ and a probability space $\mathcal{S}$. Consider a sequence $\left\{G^{m}(t)\right\}_{m>1}$ of a Markov random process
where $G_{i}^{m}(t)$ is the $i$-th component of $G^{m}(t)$. Denote a subset $\Gamma \subset \mathbb{R}^{d+1}$ containing the vectors $\left[0,g_{1},\ldots,g_{d}\right]$ such that,
\begin{equation}
\mathbb{P}\left( \frac{G_{i}^{(m)} \left(t=0\right)}{m} =g_{i}\right) > 0, \forall i \in [1,d], m > 1
\end{equation}

Let $f_{i}$ be functions from $\mathbb{R}^{d+1}$ to $\mathbb{R}^{d}$, satisfying the following conditions:

\begin{enumerate}
\item For $t <m$, there exists a constant $c_{i}^{m}$ such that,
\begin{equation}
\left \lvert G_{i}^{(m)}(t+1) - G_{i}^{(m)}(t)\right \rvert \le c_{i}^{m}, 1\le i \le d
\end{equation}
\item For $t <m$ and $\forall i\in[1,d[$,
\begin{eqnarray}
&{}& \mathbb{E} \left[G_{i}^{(m)}(t+1) - G_{i}^{(m)}(t) \rvert G^{(m)}(t) \right] \nonumber \\
&{}& \triangleq f_{i}\left(\frac{t}{m},\frac{G_{1}^{(m)}(t) }{m}\ldots,\frac{G_{d}^{(m)}(t) }{m}\right)
\end{eqnarray}
\item $f_{i}, \forall i \le d$ is Lipschitz continuous function on the intersection of $\Gamma$ with the half space
$\left\{\left(t,g_{1},\ldots,g_{d}\right):t\ge 0\right\}$, \textit{i.e.}, if $x,y \in \mathbb{R}^{d+1}$ belong to this
intersection, then there exists a Lipschitz constant $\zeta$ such that,
\begin{equation}
\left \lvert f_{i}(x)-f_{i}(y)\right \rvert \le \zeta \sum_{j=1}^{d+1}{\left \lvert x_{j}-y_{j}\right \rvert}
\end{equation}
\end{enumerate}

The above conditions are respectively the boundedness, the trend and the Lipschitz conditions. The boundedness condition implies that the function is bounded. The trend condition imposes that we can describe the process with a time series without any knowledge of the serial correlation between $G^m(t)$. Finally, the Lipschitz condition means that the functions $f_i$ are not steep and bounded by $\zeta$. Under these conditions, the following holds true:
\begin{enumerate}
\item For the vector $\left[t,g_{1},g_{2},\ldots,g_{d}\right] \in \Gamma$, the system of differential equations
\begin{equation}
\frac{\partial g_{i}}{\partial \tau } = f_{i}\left(\tau,g_{1},\ldots,g_{d}\right), 0\le i \le d
\label{eqn:lip}
\end{equation}
has a unique solution for $g_{i}(\tau):\mathbb{R} \to \mathbb{R}$  in $\Gamma$ with the initial condition
$g_{i}(0)=x_{i}, 1\le i \le d.$
\item There exists a strictly positive constant $\delta$ such that
\begin{equation}\label{eqn:monster}
\mathbb{P} \left(\left \lvert \frac{G_{i}^{(m)}(t)}{m}- g_{i} \left( \frac{t}{m} \right)\right \rvert \ge \delta m^{-\frac{1}{6}}\right)  < \frac{d m^{\frac{2}{3}}} {e^{\frac{m^{\frac{1}{3}}}{2}}}
\end{equation}
for $0 \le t \le m \tau_{\max}$ and for each $i$. The term $g_{i}$ is the unique solution obtained by solving Eq. (\ref{eqn:lip}) with the initial conditions $g_{i}(0)=\mathbb{E}\left[\frac{G_{i}^{(m)}(t=0)}{m}\right]$
and $\tau_{\max}=\tau_{m}$ is the supremum of those $\tau$ to which the solution can be extended, under some boundedness criteria \cite{Wormald95,ModernCodingTheory}.
\end{enumerate}

In other words, Eq. (\ref{eqn:monster}) states that, when $m$ is big enough, each realization of the process $G_{i}^{(m)}(t)$ is close to the (unique) solution of Eq. (\ref{eqn:lip}), with high probability. This permits us to use a set of differential equations to describe the decoding process. A formal proof of the applicability of the Wormald theorem on the Decoder-D algorithm can be further found in \cite{ModernCodingTheory}.

\subsection{Expected Growth Codes behavior}
\label{sec:symdecprob}

We have seen above that a set of differential equations can describe the expected decoding performance as the decoding procedure evolves. In particular, the Decoder-D algorithm proceeds as long as it can find check nodes of degree one to process, otherwise the decoding halts. At each decoding step a check node with degree one is randomly chosen to be eliminated. The outgoing edge of the selected node is connected with one of the $\sum_{j}V_{j}(t)$ variable nodes, while the total number of edges $E(t)$ in the (residual) graph $G_t$ is equal to $\sum_{j} j V_{j}(t)$. Thus, the probability that an edge is connected to a variable node of degree $i$ is $\frac{i V_{i}(t)}{\sum_{j} j V_{j}(t)}$.

The expected decrease in the number of variable nodes with  degree $i$ at time $t$ is then expressed as,
\begin{equation}
\mathbb{E}\left[V_{i}(t+1)-V_{i}(t)\lvert V(t),E(t) \right] =
-\frac{i V_{i}(t)}{\sum_{j} j V_{j}(t)}
\end{equation}

Similarly to the variable nodes, we can determine the expected decrement in check nodes degree. When an edge of a check node of degree one is removed, the number of check nodes of degree one is reduced by one. If the removed edge is connected to a check node of degree $i+1$, the residual degree changes from $i+1$ to $i$ with probability $\frac{(i+1) C_{i+1}(t)}{\sum_{j}jC_{j}(t)}$. Therefore, in expectation, $-1+\frac{\sum_{j}{j^{2} V_{j}(t)}}{\sum_{j}{V_{j}(t)}}$ edges are removed.

The expected decrease of check nodes with degree $i$ is written as,

\begin{multline}
\mathbb{E}\left[C_{i}(t+1)-C_{i}(t)\lvert V(t),E(t) \right] \approx\\
\frac{i \left[C_{i+1}(t)-C_{i}(t)\right]}{\sum_{j}{j
V_{j}(t)}}\frac{\sum_{j}{j (j-1) V_{j}(t)}}{\sum_{j} j V_{j}(t)},
i \ge 2
\end{multline}

The examined node is removed from the set of check nodes of degree one, and the expected decrease in degree is given as
\begin{multline}
\mathbb{E}\left[C_{1}(t+1)-C_{1}(t)\lvert V(t),E(t) \right] \approx\\
-1+\frac{\left[C_{2}(t)-C_{1}(t)\right]}{\sum_{j}{j
V_{j}(t)}}\frac{\sum_{j}{j (j-1) V_{j}(t)}}{\sum_{j} j V_{j}(t)}
\end{multline}

Since we assume that the process stays close to its expected behavior, as the decoding algorithm corresponds to a stochastic process, we can drop the expectation and write $v_{i}(\tau) \approx \frac{V_{i}(t)}{n V'(1)}$ and $c_{i}(\tau) \approx \frac{C_{i}(t)}{n V'(1)}$, where $\tau \triangleq \frac{t}{n V'(1)}$ denotes the normalized time. $V(t)$ is the degree distribution function of the variable nodes and $V(1)^{'}$ is the average degree of $V(t)$. Formally, $c_i$ (or $v_i$) describes the fraction of check nodes (or variable nodes) connected to $i$ edges. Then, we make the following approximations

\begin{equation}
\frac{V_{i}(t+\delta t)-V_{i}(t)}{\delta t} \approx \frac{\partial
v_{i}(\tau)}{\partial t}
\end{equation}
and
\begin{equation}
\frac{C_{i}(t+\delta t)-C_{i}(t)}{\delta t} \approx \frac{\partial
c_{i}(\tau)}{\partial t}
\end{equation}

\begin{figure*}[t]
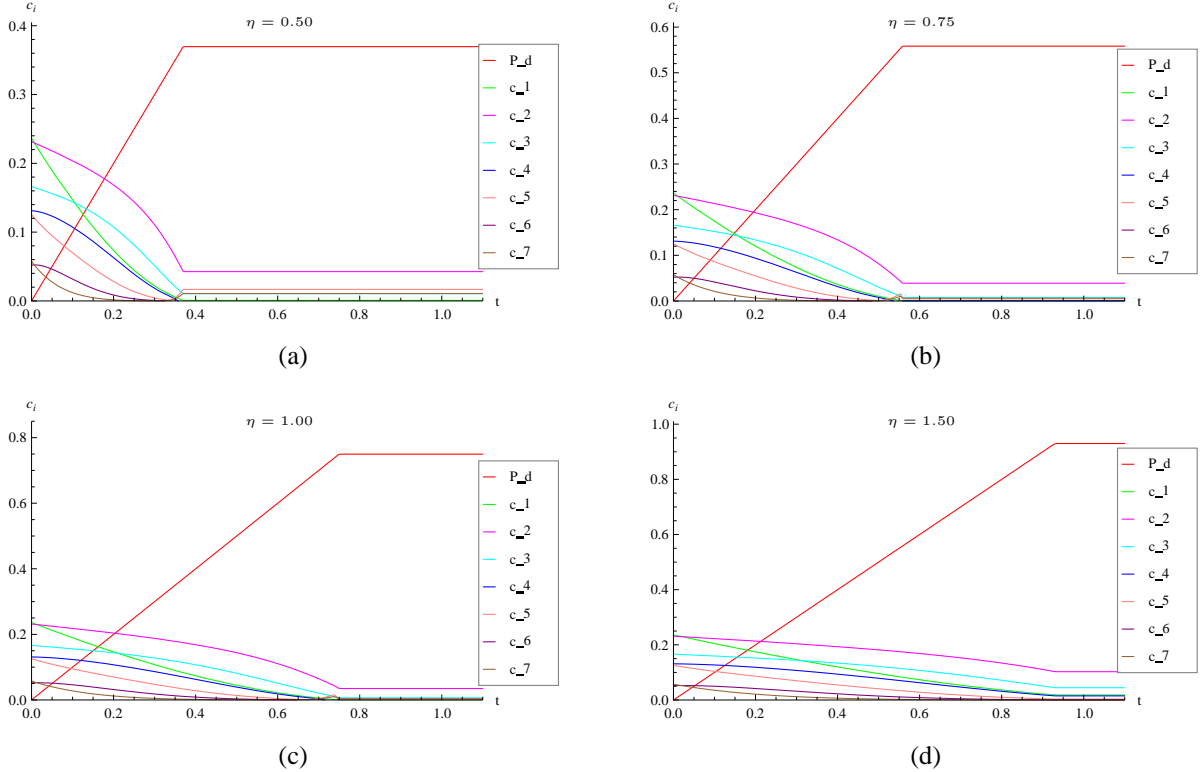

\begin{center}
\begin{tabular}{cc}
~\input{subfigure_result_R_0p5.pgf}
~ & ~\input{subfigure_result_R_0p75.pgf}~\\
~(a)~ & ~(b) ~\\
~\input{subfigure_result_R_1p0.pgf}
~ & ~\input{subfigure_result_R_1p5.pgf}~\\
~(c)~ & ~(d) ~\\
\end{tabular}
\end{center}
\caption{Temporal evolution of the fraction $c_i(t)$ of check nodes of degree $i$  for various reception rates $\eta$: (a) $\eta=0.5$, (b) $\eta=0.75$, (c) $\eta=1.0$, and (d) $\eta=1.5$.
}
 \label{fig:growth_analysis_deg}
\end{figure*}

We can now express the differential equations involving variable and check nodes degree distributions, which govern the behavior of the decoder:

\begin{eqnarray}
\frac{\partial v_{i}(t)}{\partial t} &=& -i \frac{v_{i}(t)}{\sum_{j=1}^{d_{v}^{\max}}{{j v_{j}(t)}}}, \label{eqn:differential1} \\
\frac{\partial c_{i}(t)}{\partial t} &=& \left[c_{i+1}(t)-c_{i}(t)\right]\frac{i \sum_{j}^{d_{c}^{\max}}{{j (j-1)}c_{j}(t)}}{\sum_{j=1}^{d_{c}^{\max}}{{j c_{j}(t)}}},  \label{eqn:differential2} \\
\frac{\partial c_{1}(t)}{\partial t} &=& -1+\left[c_{2}(t)-c_{1}(t)\right]\frac{\sum_{j}^{d_{c}^{\max}}{{j (j-1)}c_{j}(t)}}{\sum_{j=1}^{d_{c}^{\max}}{{j c_{j}(t)}}}\;\;\; \;\label{eqn:differential3} \\
\nonumber
\end{eqnarray}

The parameters $d_c^{\max}$ and $d_v^{\max}$ are the maximum degrees of the check and variable nodes respectively; we typically set them to a very large value (theoretically infinity). Eq. (\ref{eqn:differential1}) shows the expected decrease in the number of variable nodes of degree $i$. Eq. (\ref{eqn:differential2}), which is valid $\forall i, i\ge2$, and Eq. (\ref{eqn:differential3}) describe the residual degree changes from $i+1$ to $i$ when an edge connected with a check node of degree $i+1$ is removed. The first term ($-1$) in Eq. (\ref{eqn:differential3}) is due to the removal of an edge from the check node of degree one.

The solution of these differential equations with appropriate initial conditions describe the evolution of the degree distribution in the check and variable nodes. For Growth codes, we have that the initial check nodes distribution of the check nodes is $\Omega(d)$ in Eq. (\ref{eqn:growth}). Then, in the limit of large number of source packets $k$, the variable nodes follow a Poisson distribution \cite{RandomGraphs}, since the source symbols are selected uniformly at random during the encoding process.

A closed form solution to these equations is non-trivial to determine. However, we can solve them numerically. For verifying our analysis we examine the degree evolution of the nodes during decoding with a Decoder-D algorithm. We choose the receiving symbol rate  $\eta$ (normalized to the source rate) in the range $[0, 1.5]$ (\textit{i.e.}, for $\eta < 1$ we have intermediate performance). The evolution of various degree edges at different $\eta$ values is shown in Fig. \ref{fig:growth_analysis_deg} where the number $k$ of source packets is set to 1000. The fraction of variable nodes of degree $0$ corresponds to the recovery probability $P_d$, i.e., $P_d=v_0$. The evolution is in accordance with the results presented in \cite{Growthcodes}. From Fig. \ref{fig:growth_analysis_deg}, it becomes obvious that the number of nodes with degree zero increases with the decoding time, but it reaches a stationary point that determines the performance limit of the codes. We can also observe that the performance improves with the symbol rate. Furthermore, our model based on Wormald method closely matches the simulation results in \cite{Growthcodes}, which are derived specifically for moderate codeblocks such as $k=1000$. Our method is however more generic and applies to different settings. For example, the proposed analysis method is appropriate for large codeblocks and medium size codeblocks, \textit{i.e.}, $k < 1000$ while other analysis works \cite{Sanghavi07} can be used exclusively for large codeblocks $k > 10000$. Interestingly, the presented analysis method is valid for any $\eta$ value. In contrary to our method, the work in \cite{Sanghavi07} can be applied only in the intermediate range.

\subsection{Recovery probability }
\label{sec:symdecprob}

The above analysis can be used to characterize the decoding procedure and derive the symbol decoding probability for the Decoder-D. We thus define the symbol recovery probability as $\mathbb{P}_d=v_0$ ($v_0$ stands for the probability of recovering a symbol, its degree is zero). The probability $\mathbb{P}_d$ is depicted in Fig. \ref{fig:growth:symbol_recovery:probability4} for various code rates $\eta$. The dotted line represents the symbol recovery performance for Growth codes derived by the Wormald method, the dashed-doted line represents the results of the experimental evaluation for $k=1000$ and the dashed line describes the recovery performance of an ``ideal'' code that is able to recover a new source symbol with every received Growth encoded symbol. The ``ideal'' code is presented for the sake of the completeness, since such code does not exist in practice. From the Fig. \ref{fig:growth:symbol_recovery:probability4}, we can infer that, for large receiving symbol rates ($\eta>1.5$), Growth codes can decode the full message as $\mathbb{P}_d$ reaches one.

To the best of our knowledge, no closed formula exists for the symbol decoding probability in the Decoder-D algorithm, which is valid for any arbitrary check nodes degree distribution; it is quite difficult to derive such an analytical model from the differential equations (\ref{eqn:differential1})-(\ref{eqn:differential3}). By observing Fig. \ref {fig:growth:symbol_recovery:probability4}, we however note that an exponential model is a good fit. Thus, the symbol decoding probability can be approximated as 
\begin{equation}
\mathbb{P}_{d}(\eta)  \approx 1-\lambda e^{-\mu \eta^{2}}
\label{eqn:Pd}
\end{equation}
where $\eta=r/k$ is the coderate and $r$ denotes the number of received symbols. Similarly, we define the probability of symbol decoding failure as
\begin{equation}
\mathbb{P}_{l}(\eta)  = \lambda e^{-\mu \eta^{2}}.
\label{eqn:Pl}
\end{equation}

The values of $\mu$ and $\lambda$ depend on the underlying degree distribution function and not on the code length. This is expected as our model based on Wormald method also depends only on the degree distribution of the employed codes. We found through experimentation that for Growth codes $\lambda =0.926854$ and $\mu=1.39361$. The evaluation of the model of Eq. (\ref{eqn:Pd}) is depicted with the continuous line in Fig. \ref{fig:growth:symbol_recovery:probability4} for $k=1000$. Since our model is a function of the $\eta$ value and not of the code length $k$, we can use it to characterize the performance of Growth codes of various codeblocks. We see from Fig. \ref{fig:growth:symbol_recovery:probability4} that our model agrees with the experimental evaluation. 

\begin{figure}[t]
\begin{center}
\includegraphics[width=0.5\textwidth]{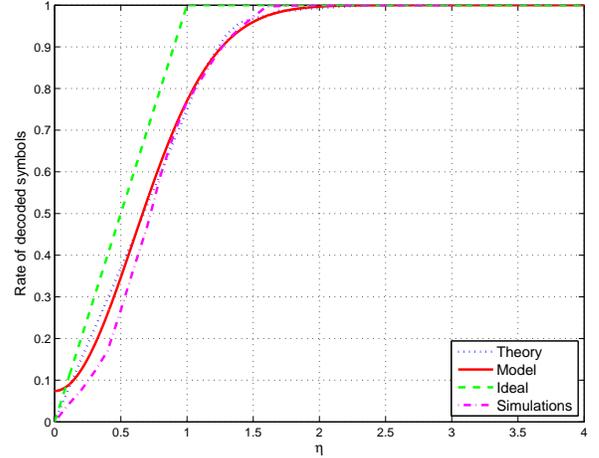}
\end{center}
\vspace{-0.5cm}
\caption{Symbol recovery probability of Growth codes. The continuous curve is $\mathbb{P}_{d}(\eta)  = 1-\lambda e^{-\mu \eta^{2}}$ ($\eta$ denotes the reception rate), the dotted line corresponds to the growth code performance derived by Wormald method, the dashed-doted line represents the results of the experimental evaluation, and the dashed line the ideal intermediate performance (it corresponds to the performance of an ``ideal'' code that is able to recover with every received Growth symbol a new source symbol). The codeblock size $k$ is equal to 1000.}
\label{fig:growth:symbol_recovery:probability4}
\end{figure}

\section{Error-resilient Video Streaming}
\label{sec:distanalysis}

The analysis of the Growth Codes performance is important for the optimization of error resilient video communications where the good intermediate performance of the Growth codes can be very beneficial. Video is indeed a signal that may tolerate some losses and still offer a decent rendering quality. In particular, the above permits to select appropriate source and channel rates for optimized communications on rate-constrained lossy channels. The efficiency of video transmission systems is measured with respect to the video quality at the clients. This quality typically depends on the encoded video at sources and on the effect of losses during transmission that are respectively driven by the source and channel rates. In the following, we quantify the end-to-end performance of  an illustrative video application using Growth codes in order to highlight the benefits of good intermediate performance. Note that a similar analysis can be done for LT codes, which also fit with the representation used in the Wormald analysis above.

\subsection{Illustrative video model}
\label{sec:videomodel}

Video compression is first applied to an image sequence in order to reduce the transmission rate. However, compression renders video data  sensitive to packet losses. Therefore, the compressed video data should be protected to avoid rapid quality degradation in case of packet losses. Typically, this is achieved by forward error correction coding (FEC) schemes.

The overall distortion for communication scenarios where channel codes are used for FEC depends on the efficiency of the employed source compression scheme and the channel codes. Whereas the source coding distortion $D_{s}$ decreases with increasing source rate $R$, the channel coding distortion $D_{c}$ increases with lesser redundancy. Several distortion models are proposed in the literature \cite{paper:frossard:2001}, \cite{paper:Sakazawa}. Considering hierarchically structured compression schemes such as MPEG, an analytical expression has been found in \cite{paper:frossard:2001} for the end-to-end distortion $D$, where the distortion depends on the MPEG coding parameters and the average recovery probability of the deployed FEC codes.

The source distortion can be written as
\begin{equation}
D_{s}(R)=\alpha R^{-\beta}
\label{eqn:Ds}
\end{equation}
where $R$ is the source bit rate, $\alpha>0$ and $\beta>0$ are parameters that depend on the source encoder and on the video content.

The distortion, in case of loss, takes into account error propagation, error patterns and error concealment. Therefore, it reads
\begin{equation}
D_{L}(R)= b \left(1+ \frac{R}{2 a N_{s} L_{p} }\right)  \mathbb{P}_{l}(\eta)
\label{eqn:Dc}
\end{equation}
where $N_{s}$ is the average number of video slices\footnote{Each slice consists of several macroblocks, which are the basic encoding unit of H.264/AVC and other block-based video compression schemes.} per second and $L_{p}$ is the packet size (in bytes) of the transmitted video data. The parameter $a$ indicates the type of video loss/distortion pattern that depends on the packetization scheme; $b$ is a constant related to the spatio-temporal complexity of the sequence and error concealment scheme. Finally, $\mathbb{P}_{l}(\eta)$ is the average residual packet loss probability after FEC decoding and $\eta=r/k$ is a redundancy factor with $r$ being the number of channel symbols and $k$ the number of source symbols. We propose to use this end-to-end distortion model in our illustrative example of optimized Growth Codes video transmission scheme, where we replace the packet loss probability $\mathbb{P}_{l}(\eta)$ by the one obtained in the performance analysis of Section \ref{sec:wormald}. 

\subsection{Optimal joint source and channel coding}
\label{sec:joint}

\begin{figure}[t]
\begin{center}
\includegraphics[width=0.5\textwidth]{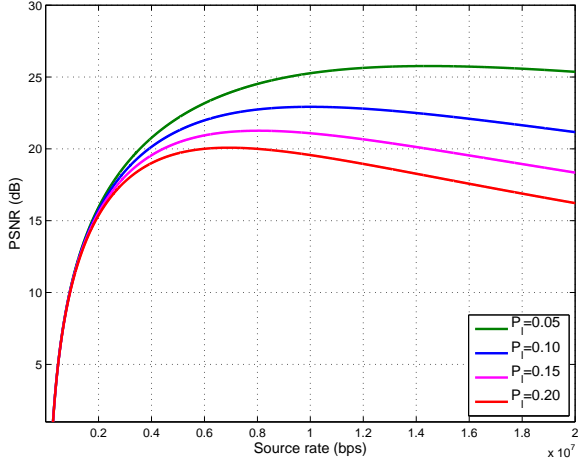}
\end{center}
\caption{Illustration of the PSNR (in dB) for the Foreman CIF sequence encoded at different rates ($R_s$) for various decoding failure probabilities $\mathbb{P}_l(n)$ using the model of Eq. (\ref{eq:optimal_D}).}
\label{fig:fig_video_D_convex1}
\end{figure}

Since the video quality depends on the source and channel distortions, it depends on the source bit rate $R$. Thus, the optimal value of $R$ that minimizes the end-to-end distortion function $D(R)$ can be determined by

\begin{eqnarray}
R^{\star} =\arg \underset{R}{\min} \quad {D(R)}\\
s.t. \; (1+\epsilon)R \le B \nonumber
\label{eqn:Rstar}
\end{eqnarray}

\noindent where
\begin{eqnarray}
D(R) &=& \left(1-\mathbb{P}_{l}(\eta)\right) D_{s}(R)+\mathbb{P}_{l}(\eta)  D_{c}(R) \\
    &=& D_{s}(R)+\mathbb{P}_{l}(\eta) \left( D_{c}(R)- D_{s}(R)\right) \nonumber\\
    &=& D_{s}(R) + D_{L}(R) \nonumber
\end{eqnarray}
and $D_c(R)$ denotes the concealment distortion (\textit{i.e.}, the average distortion between original macroblocks and the concealed marcoblocks at the receiver in case of loss) \cite{paper:frossard:2001} and $D_L(R)=\mathbb{P}_{l}(\eta) \left( D_{c}(R)- D_{s}(R)\right)$ the average distortion between a losslessly and lossily transmitted versions, as defined in \cite{paper:frossard:2001}. Trivially, for lossless transmission, $D_c$(R) is equal to zero. Here, the parameter $\mathbb{P}_l(\eta)$ is a constant where the affordable loss rate is determined by the video application constraints. The parameter $B$ in Eq. (\ref{eqn:Rstar}) stands for the channel capacity, whose constraint takes into account \textit{i.e.}, the amount of added channel redundancy (i.e., $\epsilon$) for achieving a given symbol decoding failure rate  $\mathbb{P}_l$. The value of $\epsilon$ can be computed from Eq. (\ref{eqn:Pl}) by setting $\eta=1+\epsilon$. We further assume that there is no estimate for $\pi$, the actual channel packet loss rate. Obviously, when an estimate for $\pi$ is available, other Rateless codes such as LT codes would be a better solution, as they characterized by smaller overhead value $\epsilon$ than Growth codes. Growth codes are however interesting in a setup where the actual channel loss rate is unknown, and where the application can benefit from improved intermediate performance. 

Now, taking into account Eqs. (\ref{eqn:Ds}) and (\ref{eqn:Dc}), we can write Eq. (\ref{eqn:Rstar}) as
\begin{equation}
D(R) = \alpha R^{-\beta} + b \left(1+ \frac{R}{2 a N_{s} L_{p} }\right) \mathbb{P}_{l}(\eta)
\label{eqn:D}
\end{equation}

Then, the source and channel rate optimization problem of Eq. (\ref{eqn:Rstar}) is rewritten as
\begin{equation}
R^{\star} =\arg \underset{R}{\min} \quad {{ \left[\alpha R^{-\beta} + b \left(1+ \frac{R}{2 a N_{s} L_{p} }\right)  \mathbb{P}_{l}(\eta) \right]} }.
\end{equation}

Before determining the optimal source coding rate $R^{\star}$, we first prove that $D(R)$ is convex.

\begin{lemma}
The end-to-end distortion function $D(R)$ is convex.
\label{l:lemma}
\end{lemma}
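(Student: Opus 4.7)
The plan is to differentiate $D(R)$ twice with respect to $R$ and verify that the second derivative is strictly positive on the feasible domain $R>0$. Before doing so, I would first clarify the role of $\mathbb{P}_{l}(\eta)$: as the paragraph preceding the lemma emphasizes, this quantity is treated as a constant determined a priori by the application's tolerance to residual losses (and, through Eq.~(\ref{eqn:Pl}), by the overhead $\epsilon$ rather than by $R$ itself). Hence, for the purpose of analyzing convexity in $R$, the factor $\mathbb{P}_{l}(\eta)$ can be regarded as a positive scalar, which dramatically simplifies the structure of $D(R)$.

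With that in mind, I would decompose $D(R)$ into three terms,
\begin{equation}
D(R) \;=\; \underbrace{\alpha R^{-\beta}}_{D_s(R)} \;+\; \underbrace{b\,\mathbb{P}_{l}(\eta)}_{\text{constant}} \;+\; \underbrace{\frac{b\,\mathbb{P}_{l}(\eta)}{2 a N_s L_p}\,R}_{\text{linear in }R},
\end{equation}
and treat each summand separately. A constant is trivially convex, as is any affine function in $R$; thus, only the source-distortion term $\alpha R^{-\beta}$ requires a genuine argument. A direct computation gives
\begin{equation}
\frac{d^2}{dR^2}\bigl(\alpha R^{-\beta}\bigr) \;=\; \alpha\,\beta\,(\beta+1)\,R^{-\beta-2},
\end{equation}
which is strictly positive on $R>0$ since $\alpha>0$ and $\beta>0$ by the assumptions on the source model in Eq.~(\ref{eqn:Ds}). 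Hence $D_s(R)$ is strictly convex on its domain.

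Convexity is preserved under nonnegative sums, so combining the strictly convex term $\alpha R^{-\beta}$ with the affine remainder yields $D''(R) = \alpha\,\beta\,(\beta+1)\,R^{-\beta-2} > 0$ for every $R>0$, which establishes the lemma. I do not foresee any serious obstacle, because the only potentially non-convex ingredient, $\mathbb{P}_{l}(\eta)$, has already been fixed as a constant; the only subtlety worth spelling out is precisely this modeling assumption, so that the reader does not mistakenly view $\eta$ as a function of $R$ through the bandwidth constraint $(1+\epsilon)R\le B$. If one were instead to let $\epsilon$ (and therefore $\mathbb{P}_{l}$) vary jointly with $R$, a more delicate argument involving the exponential form of Eq.~(\ref{eqn:Pl}) would be required; but under the stated formulation the proof reduces to the two-line calculation above.
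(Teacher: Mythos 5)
Your proof is correct and rests on the same essential calculation as the paper's: both reduce to observing that $\frac{d^2}{dR^2} D(R) = \alpha\beta(1+\beta)R^{-(2+\beta)} > 0$ for $R>0$, given $\alpha,\beta>0$. The paper reaches this by differentiating $D(R)$ once, solving for the stationary point $R^{\star}$, and then evaluating the second derivative at $R^{\star}$, whereas you decompose $D(R)$ into a strictly convex term $\alpha R^{-\beta}$ plus a constant plus an affine term and invoke closure of convexity under nonnegative sums. Your framing is marginally cleaner: it verifies $D''(R)>0$ on the entire domain rather than just at the stationary point (which is what convexity actually requires, even though the paper's formula clearly holds for all $R>0$), and it makes explicit the modeling assumption — $\mathbb{P}_{l}(\eta)$ treated as a constant independent of $R$ — that the paper leaves implicit in the lemma's proof. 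Both approaches buy the same conclusion; yours is slightly more transparent about why the non-power terms play no role.
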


\begin{proof}
In order to prove the convexity of $D(R)$ is convex, we compute the first derivative of $D(R)$ with respect to the rate $R$.

\begin{eqnarray*}
\frac{\partial}{\partial R} D(R) &=& \frac{\partial}{\partial R} { \left[\alpha R^{-\beta} + b \left(1+ \frac{R}{2 a N_{s} L_{p} }\right)  \mathbb{P}_{l}(\eta) \right]} \\
&=& \frac{b \mathbb{P}_{l}(\eta) }{2 a N_{s} L_{p} }-\alpha \beta R^{-(1+\beta)}
\end{eqnarray*}

Therefore, the stationary point $R^{\star}$ that satisfies $\frac{\partial}{\partial R} D(R) =0$ is,
\begin{equation}
R^{\star}= \left(\frac{2 a \alpha \beta N_{s} L_{p}}{b \mathbb{P}_{l}(\eta)}\right)^{\frac{1}{1+\beta}}
\label{eq:optR}
\end{equation}

We now have to prove that the second derivative $\frac{\partial^{2}}{\partial R^{2}} D(R) $ at $R^{\star}= \left(\frac{2 a \alpha \beta N_{s} L_{p}}{b \mathbb{P}_{l}(\eta)}\right)^{\frac{1}{1+\beta}}$  is non-negative.

Thus,
\begin{eqnarray*}
\frac{\partial^{2}}{\partial R^{2}} D(R) &=& \alpha \beta \left(1+\beta\right) (R^{\star})^{-(2+\beta)} \\
&=& \alpha \beta \left(1+\beta\right) \left(\frac{b \mathbb{P}_{l}(\eta)}{2 a \alpha \beta N_{s} L_{p}}\right)^{\frac{2+\beta}{1+\beta}},
\label{eq:rstar}
\end{eqnarray*}
which is clearly positive since $\alpha,\beta>0$. Hence, $D(R)$ is convex and a unique minimum exists for a fixed $\mathbb{P}_{l}(\eta)$.
\end{proof}

Therefore, we can conclude the following. When $(1+\epsilon)R^{\star} \ge B$, the optimal rate $R^{\star}$ is computed by Eq. (\ref{eq:optR}); otherwise, $R^{\star}=\frac{B}{1+\epsilon}$. This is due to the fact that the plane $(1+\epsilon)R - B$ intersects $D(R)$ at $R=\frac{B}{1+\epsilon}$.

\begin{figure}[t]
\begin{center}
\includegraphics[width=0.5\textwidth]{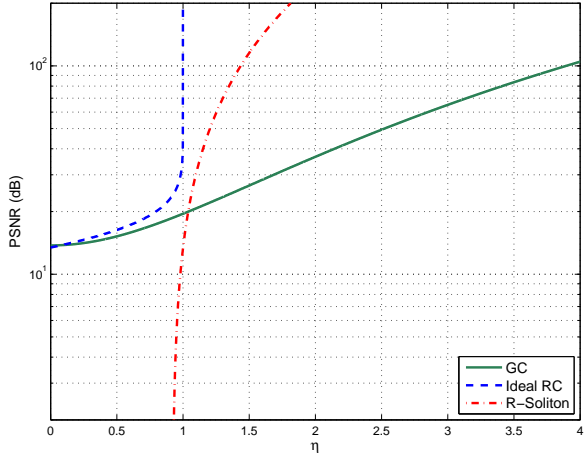}
\vspace{-0.5cm}
\end{center}
\caption{Comparison of the proposed optimized Growth codes (GC) with ideal Rateless codes (ideal RC) and LT codes ``R-Soliton'' for transmission of Foreman CIF sequence with respect to various $\eta$ values. The distortion is given in terms of PSNR.}
\label{fig:fig_growth_video_curves1}
\end{figure}

Hence, we have that the minimum distortion ${D}^{\star}$ is given by,
\begin{equation}
D^{\star}=\alpha (R^{\star})^{\beta}+b\left(1+\frac{R^{\star}}{2 a N_{s} L_{p}}\right) \lambda e^{-\mu \eta^{2}}
\label{eq:optimal_D}
\end{equation}
where the optimal rate is equal to
\begin{equation}
R^{\star}=\left( \frac{2 a \alpha \beta N_{s} L_{p}}{b\lambda e^{-\mu \eta^{2}}}\right)^{\frac{1}{1+\beta}}.
\label{eq:optimal_R}
\end{equation}

The distortion for video transmission under different packet recovery probabilities, when Growth codes are used for error protection and the model parameters of \cite{paper:frossard:2001} are adopted is shown in Fig. \ref{fig:fig_video_D_convex1}. We investigate the distortion in both intermediate region $0\le \eta <1$ and full recovery region $\eta \ge 1$. From Fig. \ref{fig:fig_video_D_convex1}, we can see that $D(R)$ is strictly convex. 

\subsection{Evaluation of rateless video distortion}

We are interested in the average distortion performance when a certain number of coded packets reach the receiver. No synchronization between the servers and receivers is assumed, which means that some packets may be received multiple times. 

Distortion comparisons in terms of PSNR are illustrated in Fig. \ref{fig:fig_growth_video_curves1} when the $D(R)$ convex hull does not intersects with the plane $(1+\epsilon)R - B$, \textit{i.e.}, there is enough bandwidth to accommodate the transmission of the video encoded at rate computed by Eq. (\ref{eq:optR}). The source coding parameters are chosen based on the model in \cite{paper:frossard:2001}. The proposed rate optimal Growth codes are denoted as GC in Fig. \ref{fig:fig_growth_video_curves1}. The optimal source rate value is calculated by Eq. (\ref{eq:optR}). The proposed scheme is compared with an ideal Rateless code, which can recover as many source symbols as the received coded symbols, \textit{i.e.}, it recovers $\eta N$ source symbols when $\eta N$ coded packets are received in the intermediate region (\textit{i.e.,} $0 \le \eta < 1$) and fully recovers the source when $\eta \ge 1$. Clearly, the performance of the ideal code serves as an upper limit of the performance of any channel code. From Fig. \ref{fig:fig_growth_video_curves1}, we note that when the proposed rate optimal Growth codes are used, the overall distortion follows a smooth waterfall curve. In the intermediate region, the rate optimal Growth codes offer comparable distortion performance to ideal Rateless codes. We also compare with LT codes (denoted as ``R-Solition''). The optimal source rate is computed by Eq. (\ref{eq:optR}) using our model (Eq. (\ref{eqn:Pl})) with parameter values $\lambda=5\cdot 10^8$ and $\mu=-20$ for calculating the decoding failure probability. The large distortion gap between Growth codes and LT codes (denoted as ``R-Soliton'') in intermediate region is also obvious from Fig. \ref{fig:fig_growth_video_curves1}. This is expected as the Robust soliton distribution has a small percentage of degree one symbols and it is designed for full recovery. In the full recovery area, the performance of LT codes improves rapidly with $\eta$ value and approaches the performance of ideal Rateless codes. Hence, we can conclude the full advantage of Growth codes is expressed with a source coder such as a video coder whose utility increases with the decoded data. 

\begin{figure}[t]
\begin{center}
\includegraphics[width=0.5\textwidth]{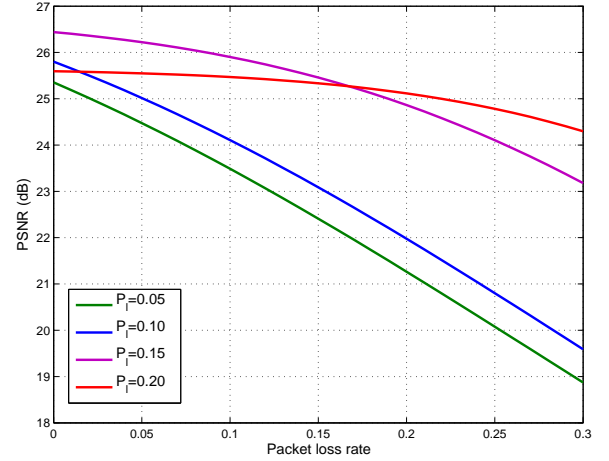}
\end{center}
\caption{Evaluation of the proposed scheme for transmission of the Foreman CIF sequence over different values of loss rate and different acceptable packet error rates $\mathbb{P}_l$ when channel capacity $B$ is $15 \;Mbps$.}
\label{fig:fig_mismatch}
\end{figure}

For the sake of completeness, we provide comparisons for channel mismatch conditions, \textit{i.e.}, when the rate of GC is optimized for a set of acceptable decoding failure rates $(5, 10, 15, 20)$\% ( these values are considered when the optimal source rates are calculated), but in practice the channel conditions are different than those considered during optimization. Specifically, the source rate is optimized assuming that the channel is free of packet losses, however in practice the actual loss rate varies in the range of $[0,30]$\%. The results are presented in Fig. \ref{fig:fig_mismatch} for transmission of the Foreman CIF sequence and channel capacity $B=15$ $Mbps$. From the results, we can observe that the acceptable decoding failure rate plays important role on the robustness of the method. When the acceptable loss rate is 15\% or 20\%, the optimal source rate is significantly lower than the capacity $B$. Hence, our scheme is characterized by higher $\eta$ values and our method is more resilient to variations of the loss rate, since the packet decoding failure rate is not affected significantly. This is verified by Fig. \ref{fig:fig_mismatch2} where we can see that the packet failure probability changes smoothly. However, when the acceptable loss rate is low, the optimal source rate is rather high and the scheme is optimized for lower $\eta$ values. In such case the symbol decoding rate is affected significantly by the increased loss rate. This is due to the fact the actual $\eta$ value is smaller than the theoretical optimal because of the channel losses. Therefore, the video transmission becomes more fragile to error changes. It is worth noting that when the acceptable failure rate is 5\%,  the packet decoding failure rate changes from 5\% to 22\%. 

\begin{figure}[t]
\begin{center}
\includegraphics[width=0.5\textwidth]{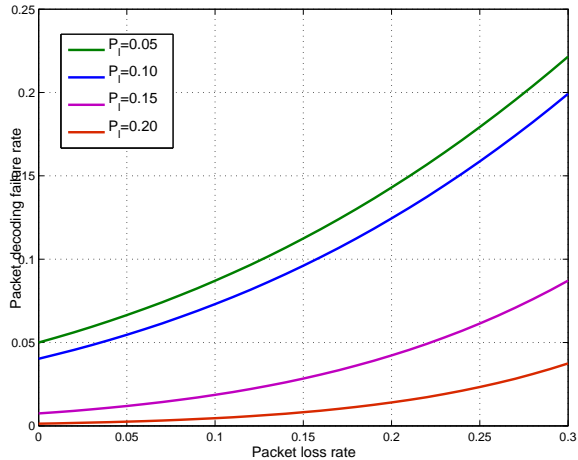}
\end{center}
\caption{Decoding packet failure probability for different acceptable packet error rates $\mathbb{P}_l$ when channel capacity $B$ is $15 \;Mbps$. Various loss rates are examined.}
\label{fig:fig_mismatch2}
\end{figure}

\section{Conclusions}
\label{sec:conclusions}

In this paper, we have analyzed the performance of Growth codes using the Wormald method and derived generic bounds that fully characterize the decoding process. We find that an exponential model can nicely describe the decoding performance. In contrary to other schemes in the literature, the proposed analysis framework is appropriate for medium sized codeblocks and not only to large sized codeblocks. It is general and appropriate for analyzing any other Rateless code that does not include a precoding step. We proposed an illustrative video streaming application to demonstrate the benefits of the good intermediate performance of Growth codes. We have casted a joint source and channel rate allocation problem whose convex objective function permits to select of the good intermediate performance of Growth codes. Our illustrative experiments show that Growth codes offer good intermediate performance that are useful for video applications whose quality monotonically grows with the number of packets. Finally, Growth codes are pretty robust to inaccurate estimations of the channel status, which is very interesting in the design of practical streaming applications.


\begin{thebibliography}{10}
\providecommand{\url}[1]{#1}
\csname url@rmstyle\endcsname
\providecommand{\newblock}{\relax}
\providecommand{\bibinfo}[2]{#2}
\providecommand\BIBentrySTDinterwordspacing{\spaceskip=0pt\relax}
\providecommand\BIBentryALTinterwordstretchfactor{4}
\providecommand\BIBentryALTinterwordspacing{\spaceskip=\fontdimen2\font plus
\BIBentryALTinterwordstretchfactor\fontdimen3\font minus
  \fontdimen4\font\relax}
\providecommand\BIBforeignlanguage[2]{{%
\expandafter\ifx\csname l@#1\endcsname\relax
\typeout{** WARNING: IEEEtran.bst: No hyphenation pattern has been}%
\typeout{** loaded for the language `#1'. Using the pattern for}%
\typeout{** the default language instead.}%
\else
\language=\csname l@#1\endcsname
\fi
#2}}

\bibitem{DigitalFountain98}
J.~Byers, M.~Luby, M.~Mitzenmacher, and A.~Rege, ``{A Digital Fountain Approach
  to Reliable Distribution of Bulk Data},'' in \emph{Proc. of ACMSIGCOMM'98},
  Vancouver, BC, Canada, Jan. 1998, pp. 56–--67.

\bibitem{Shokrollahi06}
A.~Shokrollahi, ``{Raptor codes},'' \emph{IEEE Trans. Information Theory},
  vol.~52, no.~6, pp. {2551--2567}, June 2006.

\bibitem{Luby02}
M.~Luby, ``{LT codes},'' in \emph{Proc. of the 43rd Annual IEEE Symposium on
  Foundations of Computer Science (FOCS '02)}, Vancouver, Canada, Nov. 2002,
  pp. {271--280}.

\bibitem{3GPP}
``{3GPP TS 26.346 V7.1.0, Technical Specification Group Services and System
  Aspects; Multimedia Broadcast/Multicast Service; Protocols and Codecs},''
  June 2005.

\bibitem{DVB-H}
``{ETSI EN 302 304 Digital Video Broadcasting (DVB); Transmission System for
  Handheld Terminals (DVB-H) European Telecommunication Standard, Nov. 2004},''
  {[Online], Available on web site
  \begin{tt}http://www.dvb-h-online.org/technology\end{tt}}.

\bibitem{Growthcodes}
A.~Kamra, J.~Feldman, V.~Misra, and D.~Rubenstein, ``{Growth Codes: Maximizing
  Sensor Network Data Persistence},'' in \emph{Proc of ACM Sigcomm'06}, Pisa,
  Italy, Sep. 2006.

\bibitem{UnequalGrowthcodes}
A.~G. Dimakis, J.~Wang, and K.~Ramchandran, ``{Unequal Growth Codes:
  Intermediate Performance and Unequal Error Protection for Video Streaming},''
  in \emph{Proc. of Multimedia Signal Processing, MMSP'07}, Chania, Greece,
  Oct. 2007, pp. 107--110.

\bibitem{Wormald95}
N.~C. Wormald, ``{Differential Equations for Random Processes and Random
  Graphs},'' \emph{{Annals of Applied Probability}}, no.~5, pp. 1217--1235,
  Jan. 1995.

\bibitem{Sanghavi07}
S.~Sanghavi, ``{Intermediate Performance of Rateless Codes},'' in \emph{in
  Proc. of IEEE Information Theory and Applications, ITA'07,}, Lake Tahoe, CA,
  USA, Sep. 2007, pp. 478--482.

\bibitem{paper:frossard:2001}
P.~Frossard and O.~Verscheure, ``{Joint Source/FEC Rate Selection for
  Quality-Optimal MPEG-2 Video Delivery},'' \emph{IEEE Trans. Image
  Processing}, vol.~10, no.~12, pp. 1815--1825, Dec. 2001.

\bibitem{UnequalRateless}
N.~Rahnavard, B.~N. Vellambi, and F.~Fekri, ``{Rateless Codes With Unequal
  Error Protection Property},'' \emph{IEEE Trans. Information Theory}, vol.~53,
  no.~4, pp. 1521--1532, Apr. 2007.

\bibitem{EWFTMM09}
D.~Vukobratovic, V.~Stankovic, D.~Sejdinovic, L.~Stankovic, and Z.~Xiong,
  ``{Scalable Video Multicast Using Expanding Window Fountain Codes},''
  \emph{IEEE Trans. Multimedia}, vol.~11, no.~6, pp. 1094--1104, Oct. 2009.

\bibitem{Lin82}
S.~Lin and D.~J. Costello, \emph{{Error Control Coding: Fundamentals and
  Applications}}.\hskip 1em plus 0.5em minus 0.4em\relax Prentice-Hall, 1983.

\bibitem{RandomizedAlgorithms95}
R.~Motwani and P.~Raghavan, \emph{{Randomized Algorithms}}.\hskip 1em plus
  0.5em minus 0.4em\relax Cambridge University Press, 1995.

\bibitem{ModernCodingTheory}
T.~Richardson and R.~Urbanke, \emph{{Modern Coding Theory}}.\hskip 1em plus
  0.5em minus 0.4em\relax {Cambridge University Press}, 2007.

\bibitem{Lipschitz}
H.~Jeffreys and B.~S. Jeffreys, ``{The Lipschitz Condition},'' {¤1.15 in
  Methods of Mathematical Physics, 3rd ed. Cambridge, England: Cambridge
  University Press, 1988}.

\bibitem{RandomGraphs}
B.~Bollobas, \emph{{Random Graphs}}.\hskip 1em plus 0.5em minus 0.4em\relax
  {Cambridge University Press}, 2001.

\bibitem{paper:Sakazawa}
S.~Sakazawa, Y.~Takishima, M.~Wada, and Y.~Hatori, ``{Coding control scheme for
  a multi-encoder system},'' in \emph{Proc. of the 7th Int. Workshop Packet
  Video}, Brisbane, Australia, Mar. 1996.

\end{thebibliography}
\end{document}